\newcommand{\N}{\mathbb{N}}
\newcommand{\shf}{\shuffle}
\newcommand{\EXTRA}[1]{}
\titlerunning{Square-free Shuffles of Words}
\authorrunning{Tero Harju \and Mike M\uller}
\begin{document}

\mainmatter
\title{Square-Free Shuffles of Words}

\author{Tero Harju\inst{1} \and Mike M\"uller\inst{2}\thanks{Supported by the DFG grant 582014}}
\institute{
         Department of Mathematics\\
         University of~Turku, Finland\\
\email{harju@utu.fi} \and
Institut f\"ur Informatik\\
Christian-Albrechts-Universit{\"a}t zu Kiel, Germany\\
\email{mimu@informatik.uni-kiel.de} }

\maketitle

\begin{abstract}
Let $u \shuffle v$ denote the set of all shuffles of the words $u$ and $v$.
It is shown that for each integer $n \geq 3$ there exists a square-free ternary word $u$
of length $n$ such that $u\shuffle u$ contains a square-free word. This property is then shown to also hold for
infinite words, i.e.,  there exists an infinite square-free word~$u$ on three letters such
that~$u$ can be shuffled with itself to produce an infinite square-free word $w \in u
\shuffle u$.
\end{abstract}
%



\section{Introduction}

Let $u$ and $v$ be words over a finite alphabet $\Sigma$. We let
\[
u \shuffle v = \{u_1v_1 \cdots u_nv_n \mid
u = u_1u_2 \cdots u_n, \ v=v_1v_2 \cdots v_n \
(u_i, v_i \in \Sigma^*)\}
\]
be the set of all \emph{shuffles} of $u$ and $v$. This definition extends to infinite words in
a natural way. In this paper we consider avoidance of  repetitiveness among the shuffled
words. This topic was studied, e.g., by Prodinger and Urbanek~\cite{ProdingerUrbanek} in
1979 while they considered squares in the shuffles of two words; see also~Currie and
Rampersad~\cite{CurrieRampersad} and Rampersad et al.~\cite{RampersadSW}.

In Charlier et al.~\cite{CKPZ} the authors considered \emph{self-shuffling} of infinite words.
In this problem setting shuffling is applied to an infinite word $w$ such that $w \in w
\shuffle w$. In~\cite{CKPZ} a short and elegant proof is given for the fact that the Fibonacci
word can be self-shuffled, and a longer proof is provided for the self-shuffling property of
the Thue--Morse word.

In this paper we are interested in finite and infinite square-free words~$w$ that can
be obtained by shuffling two square-free words, $u$ and~$v$, i.e., $w \in u \shuffle v$.
We show first that for each integer $n \geq 3$ there exists a square-free ternary word~$u$
such that $u\shuffle u$ contains a square-free word (of length $2n$).
Next it is shown
that there exists an infinite square-free word~$u$ on three letters such that~$u$ can be
shuffled with itself to produce an infinite square-free word $w \in u \shuffle u$.
The existence of self-shuffled words remains an open
problem, but we are able to show that
there are infinite square-free words $u$ and $w$ that can be shuffled to produce $u$
again, i.e., $u \in u \shuffle w$.

\goodbreak
\section{Preliminaries}

For each positive integer $n$, let $\Sigma_n = \{0,1, \ldots, n-1\}$ be a fixed alphabet of~$n$
letters. We shall be needing only small alphabets.

Let $u_0, u_1 \in \Sigma^*$ be words over an alphabet $\Sigma$ and let $\beta \in
\Sigma_2^*$ be a binary word of length $|\beta|=|u_0|+|u_1|$, called a
\emph{conducting sequence}, such that the number of the letter $i \in \Sigma_2$ in $\beta$
is equal to the length $|u_i|$. While forming the shuffle
\[w= u_0 \shuffle_\beta u_1\]
of $u_0$ and $u_1$ \emph{conducted by} $\beta$, at step $i$ the sequence $\beta$ will
choose the first unused letter from $u_0$ if $\beta(i)=0$ or the first unused letter from
$u_1$ if $\beta(i)=1$. That is, the $i$th letter $w(i)$  of $w$ becomes defined by
\[
w(i)=
u_{\beta(i)}(j) \  \text{ where  }  j=
\mathrm{card}\{k \mid \beta(k)=\beta(i) \ \text{ for } \ k=1,2, \ldots, i\}\,.
\]
This definition can be extended to infinite words $u,v \in \Sigma^\N$ in a natural way. Now
$u \shf_\beta v \in \Sigma^\N$ is an infinite word obtained by shuffling $u$ and $v$
conducted by the sequence $\beta \in \Sigma_2^\N$, where one requires that $\beta$
contains infinitely many occurrences of both $0$ and~$1$.

\begin{example}
Let $u=0102$ and $v=1201$ be two ternary words of length four, and let $\beta=00101110$
be their conducting sequence. Then $u\shuffle_\beta v = 01102012$.\qed
\end{example}

\begin{example}
A shuffled word $w=u \shf_\beta u$ can be obtained in more than one way from a single
word  $u$ using different conducting sequences. To see this, let, e.g.,  $u=012102010212$
and choose
\begin{align*}
\beta_1=&000000000001111111101111\\
\beta_2=&000000110100100111101111\,.
\end{align*}
Then  $w=u \shf_{\beta_1} u =u \shf_{\beta_2} u =
012102010210121020120212$.
In this example the words $u$ and $w$ are even square-free.
\qed\end{example}

A finite or infinite word $w$ over an alphabet $\Sigma$ is \emph{square-free} if it does not
have factors of the form $u^2 =uu$ for nonempty words $u$.  A hundred years ago
Axel~Thue constructed an infinite square-free word on three letters. One example of such a
word, see Hall~\cite{Hall} or Lothaire~\cite{Lothaire}, is the fixed point of the \emph{Hall
morphism}
\[
\tau(0)=012, \quad \tau(1)=02, \quad \tau(2)=1\,.
\]
The iteration of $\tau$ on $0$ gives a square-free word
\begin{equation}\label{thue}
t=012021012102012021020121012 \cdots \,.
\end{equation}

Let $h\colon \Sigma^* \to \Delta^*$ be a morphism of words. It is $k$-\emph{uniform} if $|h(a)|=k$
for all $a \in \Sigma$. Also,  $h$ is \emph{square-free}, if it preserves square-freeness of
words, i.e., if $v\in \Sigma^*$ is square-free, then so is the image $h(v) \in \Delta^*$.

A (monoid) morphism $h\colon \Sigma^* \to 2^{\Delta^*}$ from word monoid  to a power monoid of words
is called a \emph{substitution}. A substitution~$h$ is said to be \emph{square-free} if
for all square-free  $w\in \Sigma^*$, the image $h(w)$ consists of square-free words only.

The following result is due to Crochemore~\cite{Crochemore} improving a result of Bean et
al.~\cite{Bean}; see also Lothaire~\cite{Lothaire}.

\begin{theorem}\label{Crochemore}
A morphism $h\colon \Sigma^* \to \Delta^*$ of words is square-free if it preserves square-freeness of
words of length
\[
\max\left( 3, \left\lceil \frac{M-3}{m}+1\right\rceil \ \right)\,,
\]
where $M=\max( \, |h(a)| \ : \ a \in \Sigma)$ and $m=\min( \,|h(a)| \ : \  a \in \Sigma)$. In
particular, if $h$ is uniform, then it is square-free if it preserves square-freeness of words
of length~3.
\end{theorem}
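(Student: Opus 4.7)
The plan is a shortest-counterexample argument. Assume for contradiction that $h$ preserves square-freeness on every word of length $N := \max\bigl(3,\, \lceil (M-3)/m + 1 \rceil\bigr)$, yet $h$ is not square-free. Choose a square-free $w = a_1 a_2 \cdots a_k$ of minimum length with $h(w)$ not square-free, and inside $h(w)$ a square $uu$ of minimum length, so that
\[
h(w) \;=\; x \cdot uu \cdot y.
\]
Minimality of $k$ forces $uu$ to meet both outer blocks non-trivially: otherwise $h(a_2\cdots a_k)$ or $h(a_1\cdots a_{k-1})$ would already contain the square, contradicting the choice of $w$. Hence $|x| < |h(a_1)| \le M$ and $|y| < |h(a_k)| \le M$, and combining $|h(w)| \ge km$ with $|h(w)| = |x| + 2|u| + |y|$ yields the lower bound $2|u| \ge km - 2(M-1)$.

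Next, I would locate the midpoint of the square inside some block $h(a_j)$ at offset $t$, i.e.\ $|x|+|u| = |h(a_1\cdots a_{j-1})| + t$ with $0 \le t < |h(a_j)|$. Reading the identity ``first copy of $u$ equals second copy of $u$'' block by block couples a length-$|u|$ suffix of $h(a_1\cdots a_j)$ with a length-$|u|$ prefix of $h(a_j\cdots a_k)$. From the coarse upper bounds $|u| \le jM$ and $|u| \le (k-j+1)M$, the total-length equation, and the lower bound $2|u| \ge km - 2(M-1)$, a careful count of block lengths should yield
\[
k \;\le\; \Bigl\lceil \frac{M-3}{m} + 1 \Bigr\rceil,
\]
contradicting $k > N$.

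I expect the main obstacle to be the length bookkeeping in this alignment step: although the coarse estimates on $|u|$ are elementary, correctly propagating the offset $t$ through image blocks of varying lengths $m \le |h(a_i)| \le M$ demands a delicate case split, in particular handling the boundary case $t = 0$ (midpoint exactly at a block boundary) and the degenerate cases $|x| = 0$ or $|y| = 0$. The $\max$ with $3$ absorbs the small-$M$ regime where the raw formula returns a vacuous bound below $3$. In the uniform case $M = m$ all block boundaries align, the offset propagation trivialises, and the bound collapses to the ``length-$3$ suffices'' criterion stated as the last sentence of the theorem.
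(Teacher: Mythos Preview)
The paper does not prove Theorem~\ref{Crochemore}; it is quoted from Crochemore's 1982 article (with pointers to Bean--Ehrenfeucht--McNulty and Lothaire) and used as a tool throughout. There is therefore no ``paper's own proof'' to compare against.

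On the merits of your sketch: the minimal-counterexample setup and the observation that $uu$ must overlap both $h(a_1)$ and $h(a_k)$ are standard and fine. The genuine gap is exactly where you suspect it. The three inequalities you write down --- $2|u|\ge km-2(M-1)$, $|u|\le jM$, $|u|\le(k-j+1)M$ --- combine only to $km-2(M-1)\le(k+1)M$, i.e.\ $k(m-M)\le 3M-2$, which is vacuous since $m\le M$. No amount of length bookkeeping alone will squeeze out the bound $k\le\lceil(M-3)/m+1\rceil$; you are missing a structural ingredient. In Crochemore's argument, square-freeness of $h$ on words of length~$3$ first yields synchronisation properties (no $h(a)$ occurs as an internal factor of any $h(bc)$, distinct images are prefix/suffix-incomparable in the relevant sense). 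These force the block boundaries inside the two copies of $u$ to coincide whenever $u$ is long enough to contain a full image block, and that alignment produces a square in $w$ itself --- contradicting square-freeness of $w$, not the length hypothesis. Only the residual case of a \emph{short} square, where $u$ contains no complete $h(a_i)$, is disposed of by a length count, and it is that count (bounding how many blocks a factor of length $<2M$ can touch) which gives $\lceil(M-3)/m+1\rceil$. Your outline never invokes the synchronisation step, so the dichotomy ``large square $\Rightarrow$ square in $w$'' versus ``small square $\Rightarrow$ short preimage'' is absent, and the deferred ``careful count'' cannot close the argument.
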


\EXTRA{
\section{Perfect Shuffles}
Given two words $u$ and $v$ of the same length $n$ their \emph{perfect shuffle} is $u
\shf_\beta v$ where $\beta = (01)^n$, also denoted by $u \shf_p v$. Clearly every word $w
\in \Sigma_n^*$ is obtained as the perfect shuffle $u \shf_p v$ of unique words $u,v \in
\Sigma_n^*$, and every infinite word is a perfect shuffle of two infinite words.

\begin{example}
For all square-free ternary words $u,v \in \Sigma_3^*$ of length $\geq 3$, their perfect
shuffle $u \shf_p v$ is not square-free. Indeed, let without restriction  $u=01a_3 \cdots$ and
$v=b_1b_2b_3 \cdots$, and thus $u \shf_p v=0b_1 1 b_2a_3b_3 \cdots $. To avoid squares in
$u$, $v$ and in $u \shf_p v$, we must have $b_1=2$ and then $b_2=0$, and so $a_3 = 2$.
Finally, $b_3=1$, and $u \shf_p v$ starts with the square $021 021$; a contradiction.
\qed\end{example}

On the other hand, for four letters we have

\begin{theorem}
There exist infinite square-free words $w= u \shf_p v \in \Sigma_4^\N$ obtained as a
perfect shuffle of two square-free words $u$ and $v$.
\end{theorem}

\begin{proof}
Dean~\cite{Dean} showed that there are infinite square-free words $u \in \Sigma_4^\N$
over the four letter alphabet that are \emph{reduced} in the free group of two generators,
i.e., $u$ avoids the four factors $02,20,13$ and $31$ (where $0$ and $2$ are inverses of
each other, and $1$ and $3$ are inverses of each other). Given such a word $u$, let $u'$ be
its \emph{dual word} obtained by interchanging $0$ and $2$, and interchanging $1$ and
$3$ in all positions. For instance, if $u=010301210$  then $u'=232123032$.

We claim that for any reduced square-free word $u$, the perfect shuffle $u \shf_p u'$ is
also square-free. Indeed, let $u=a_1a_2 \cdots a_n$ and $u'=b_1b_2 \cdots b_n$, where
$a_i$ and $b_i$ are inverses of each other. Suppose that $w$ contains a square~$vv$.

Suppose first that $|v|$ is even. If $v$ begins with $a_i$, $v=a_kb_k \cdots
a_mb_m=a_{m+1}b_{m+1} \cdots a_{2m-k+1}b_{2m-k+1}$, then clearly $(a_k
a_{k+1}\cdots a_m)^2$ would be a square in $u$. Similarly, if $v=b_ka_{k+1} \cdots
b_ma_{m+1}= b_{m+1}a_{m+1} \cdots b_{2m-k+1}a_{2m-k+1}$, then $(b_k b_{k+1} \cdots
b_m)^2$ would be a square in $u'$; a contradiction.

Assume thus that $|v|$ is odd. If $v=a_kb_k \cdots a_m = b_ma_{m+1} \cdots a_rb_r$, then
$a_k=b_m$  is the inverse of $b_k$, and  $b_k=a_{m+1}$. But $a_m$ is the inverse of
$b_m$, and hence $a_m= b_k = a_{m+1}$ leads to a contradiction, since $a_ma_{m+1}$ is
a factor of $u$, and $u$ was supposed to be square-free. The other case where $|v|$ is
odd and begins with a  $b_k$ is similar. \qed\end{proof}
} 

\section{Shuffles of Finite Words}

\begin{example}
The shuffled word $u \shf_\beta u$ can be square-free even if $u$ is not so. For instance, if
$u=(012)^2$ then the following shuffled words are square-free:
\begin{align*}
    u \shf_{\beta_0} u= &012010212012, \text{ where } \beta_0= 000001011111\\
    u \shf_{\beta_1} u= &010201210212, \text{ where } \beta_1= 001001101011\\
    u \shf_{\beta_2} u= &010210120212, \text{ where } \beta_2= 001010011011\,.
\end{align*}
\qed\end{example}

In~\cite{harju2} the first author asked whether for each $n \geq 3$, there exists a
square-free word $u$ of length~$n$ such that $u \shf_\beta u$ is square-free for some
$\beta$. We give an affirmative answer to this question after a short technical lemma that
will be used in our construction:

\begin{lemma}\label{lem:sub}
    The substitution $h : \Sigma_3^* \rightarrow 2^{\Sigma_3^*}$, defined by
    \begin{align*}
        h(0) &\in \{01202120102120210, 012021020102120210\} \\
        h(1) &\in \{12010201210201021, 120102101210201021\} \\
        h(2) &\in \{20121012021012102, 201210212021012102\}
    \end{align*}
is square-free. 
\end{lemma}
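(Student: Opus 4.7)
The plan is to reduce to Crochemore's criterion (Theorem~\ref{Crochemore}). Because $h$ is a substitution rather than a single morphism, I first encode it as an ordinary morphism. Introduce a six-letter alphabet $\Omega = \{0_0, 0_1, 1_0, 1_1, 2_0, 2_1\}$ whose letters index the two options in $h(a)$ for each $a \in \Sigma_3$, and define the morphism $\hat h : \Omega^* \to \Sigma_3^*$ by sending $a_i$ to the $i$-th option of $h(a)$. Let $\pi : \Omega^* \to \Sigma_3^*$ be the projection that forgets the lower index. Then $h(w) = \{\hat h(\tilde w) \mid \pi(\tilde w) = w\}$, so $h$ is a square-free substitution exactly when $\hat h(\tilde w)$ is square-free for every $\tilde w \in \Omega^*$ whose projection $\pi(\tilde w)$ is square-free over $\Sigma_3$.

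Because any square in $\tilde w$ would project to a square in $\pi(\tilde w)$, the hypothesis on $\pi(\tilde w)$ actually implies that $\tilde w$ itself is square-free over $\Omega$. Hence it is enough to prove the strictly stronger statement that $\hat h$ is a square-free morphism over the six-letter alphabet $\Omega$. Every image $\hat h(a_i)$ has length $17$ or $18$, so $m = 17$, $M = 18$, and Crochemore's bound is
\[
\max\!\left(3, \left\lceil \tfrac{M-3}{m}+1\right\rceil\right) = \max\!\left(3, \left\lceil \tfrac{15}{17}+1\right\rceil\right) = 3.
\]
By Theorem~\ref{Crochemore}, the morphism $\hat h$ is square-free as soon as it preserves square-freeness on words of length $3$ over $\Omega$.

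What remains is a finite check: verify $\hat h(\tilde w)$ contains no square for each of the $6 \cdot 5 \cdot 5 = 150$ square-free length-three words $\tilde w \in \Omega^*$ (or, if one instead redoes Crochemore's argument directly for substitutions, only the $12 \cdot 2^3 = 96$ triples that project to square-free words of $\Sigma_3^*$). Each such image has length between $3 \cdot 17 = 51$ and $3 \cdot 18 = 54$, so the inspection is routine but voluminous; the natural way to discharge it is by computer. The only real obstacle is this bookkeeping—conceptually everything is handled by the coloring encoding together with Crochemore's criterion, and the numerical slack $M/m = 18/17$ is precisely what keeps the bound at $3$, so no longer words need be examined.
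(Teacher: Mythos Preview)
Your main line of attack---proving that $\hat h$ is a square-free morphism over the six-letter alphabet $\Omega$---does not work, because that ``strictly stronger statement'' is simply false. Observe that for each $a\in\Sigma_3$ both words in $h(a)$ begin and end with the letter $a$; hence $\hat h(a_i)\hat h(a_j)$ contains the square $aa$ at the junction whenever $i\neq j$. In particular the word $0_0\,0_1\,0_0$ is square-free over $\Omega$, yet $\hat h(0_0\,0_1\,0_0)$ already contains the square $00$. So the length-$3$ check you propose will fail, and Theorem~\ref{Crochemore} cannot certify $\hat h$.

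Your parenthetical fallback---checking only the $96$ triples that project to square-free words of $\Sigma_3$---is the right instinct, but it is not a consequence of Theorem~\ref{Crochemore} as stated: that theorem concludes square-freeness on \emph{all} square-free inputs, not merely on those lying over a square-free $\Sigma_3$-word. To use it you would have to reopen Crochemore's minimal-counterexample argument and verify that it still goes through when ``$\tilde w$ square-free'' is replaced by ``$\pi(\tilde w)$ square-free''; that is genuine extra work your proposal does not supply. The paper sidesteps Crochemore entirely here: it verifies that no square of half-length at most $52$ occurs inside any $h$-image of a square-free word of length~$8$, and then uses three structural properties of $h$ (no image occurs properly inside an image of a pair, images are prefix-incomparable, and images of distinct letters end in distinct letters) to rule out longer squares by a direct synchronisation argument.
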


\begin{proof} 

Note that the words in $h(a)$, $a \in \Sigma_3$, have lengths 17 and 18.
The substitution $h$ has the following three properties, which are easy to check:
\begin{enumerate}
    \item No image of a letter appears properly inside any image of a word of length~$2$,
          i.e., $h(ab) \cap xh(c)y = \varnothing$ for $a,b,c \in \Sigma_3$ and $x,y \in
          \Sigma_3^*$, if $x,y \neq \varepsilon$.
    \item No image of a letter is a prefix of an image of another letter.
    \item For $a,b \in \Sigma_3$ with $a \neq b$, $a' \in h(a)$ and $b' \in h(b)$ end with a different letter.
\end{enumerate}
Assume towards a contradiction that $w$ is square-free, but $w' \in h(w)$ contains a square
$uu$. A simple inspection shows that $h$ produces no square $uu$ with $|u| \leq 52$, as
this would be contained inside the image of a square-free word of length $8$, and we can check all of
these. Therefore, assume that $|u| > 52$, so that $u$ contains at least two full images of a
letter under $h$. Let now
\[
u = s_1w_1'\cdots w_j'p_1 = s_2w_{j+2}'\cdots
w_n'p_2
\]
here $p_1s_2 \in h(w_{j+1})$, $w_i' \in h(w_i)$ for all $1 \leq i \leq n$, and $s_1$ is a suffix of some $w_0' \in h(w_0)$ and $p_2$ is a prefix
of some $w_{n+1}' \in h(w_{n+1})$. Also, $s_1, s_2 \neq \varepsilon$.

If $|s_1| > |s_2|$, then $w_{j+2}' \in h(w_{j+2})$ appears properly inside $w_0'w_1' \in h(w_0w_1)$; a
contradiction to property (1). The same situation appears if $|s_1| < |s_2|$, then
$w_1'$ occurs properly inside $w_{j+1}'w_{j+2}'$. If $|s_1| = |s_2|$, then by iterated
application of property (2), we get that $w_1' = w_{j+2}', \ldots, w_j' = w_n'$. As
(2) also shows that $h$ is injective, we have $w_1 = w_{j+2}, \ldots, w_j = w_n$.
Furthermore, by property (3) we have $w_0 = w_{j+1}$, and thus a square $(w_0\cdots
w_j)^2$ in $w$. This proves the claim.\qed
\end{proof}

The following result is clear, and it follows the idea of Charlier et al.~\cite{CKPZ}.

\begin{lemma}
    Let $u$ be a square-free word that can be shuffled to a square-free word
    $u \shf_\beta u = u_1u'_1u_2u'_2\cdots u_nu'_n$, and let $h$ be a square-free morphism or substitution,
    then $h(u)$ can also be shuffled with itself to get the square-free word
    $h(u \shf_\beta u) = h(u_1)h(u'_1)h(u_2)h(u'_2)\cdots h(u_n)h(u'_n)$.
\end{lemma}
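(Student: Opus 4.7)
The plan is to verify two short things. First, that the proposed word $h(u_1)h(u'_1)\cdots h(u_n)h(u'_n)$ is square-free. Second, that it is genuinely a self-shuffle of $h(u)$ — or, in the substitution case, of some $w \in h(u)$.

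For the first point, I would start from the square-freeness of $u \shf_\beta u = u_1u'_1u_2u'_2\cdots u_nu'_n$, which is part of the hypothesis, and apply the morphism property position by position to rewrite
\[
h(u_1u'_1u_2u'_2\cdots u_nu'_n) = h(u_1)h(u'_1)h(u_2)h(u'_2)\cdots h(u_n)h(u'_n).
\]
Since $h$ is assumed square-free, every word in the image of a square-free word is square-free, so this concatenation is square-free.

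For the second point, I would observe that the data $u \shf_\beta u = u_1u'_1\cdots u_nu'_n$ encodes two factorizations of the same underlying word: $u = u_1u_2\cdots u_n = u'_1u'_2\cdots u'_n$. Applying $h$ consistently — meaning, in the substitution case, fixing one image at each position of $u$ so that both factorizations use the same per-position choice — gives a single word $w = h(u_1)\cdots h(u_n) = h(u'_1)\cdots h(u'_n) \in h(u)$. The concatenation $h(u_1)h(u'_1)\cdots h(u_n)h(u'_n)$ is then by construction a self-shuffle of $w$ with itself. An explicit conducting sequence $\beta'$ can be read off from $\beta$: replace each letter $\beta(k) = i$ by a block of $|h(a)|$ copies of $i$, where $a \in \Sigma_3$ is the letter of the $i$-th copy of $u$ consumed at step $k$; then $h(u) \shf_{\beta'} h(u) = h(u_1)h(u'_1)\cdots h(u_n)h(u'_n)$.

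The whole argument is really bookkeeping; the only mild subtlety is the substitution case, where one must commit to the same per-position image before identifying the two factorizations as giving the same $w$. Once this is done, no real obstacle remains and the lemma drops out immediately.
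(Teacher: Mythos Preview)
Your proposal is correct and is essentially the natural argument; the paper itself gives no proof beyond the remark that ``the following result is clear, and it follows the idea of Charlier et al.~\cite{CKPZ}.'' Your write-up simply makes explicit the bookkeeping the paper leaves to the reader, including the one genuine subtlety (in the substitution case, committing to a single per-letter image of $u$ so that both factorizations $u=u_1\cdots u_n=u'_1\cdots u'_n$ yield the same $w\in h(u)$).
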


\begin{theorem}
For each $n \geq 3$, there exists a square-free word $w_n \in \Sigma_3^*$ of length $n$
such that $w_n \shf_\beta w_n$ is square-free for some $\beta$.
\end{theorem}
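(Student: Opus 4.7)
The plan is to proceed by strong induction on $n$, using the substitution $h$ of Lemma~\ref{lem:sub} as the length-promotion tool. Suppose inductively that we have a square-free $w_k \in \Sigma_3^*$ of length $k < n$ together with a conducting sequence $\beta_k$ for which $w_k \shf_{\beta_k} w_k$ is square-free. Since each set $h(a)$ contains exactly one image of length $17$ and one of length $18$, choosing independently a short or long image at each letter of $w_k$ produces an element of $h(w_k)$ of any prescribed total length in $\{17k, 17k+1, \ldots, 18k\}$. By Lemma~\ref{lem:sub} the resulting word is square-free, and by the unnumbered lemma immediately preceding the theorem, $\beta_k$ lifts to a conducting sequence $\beta'$ such that $h(w_k) \shf_{\beta'} h(w_k)$ is square-free. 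Hence the inductive step produces a valid $w_n$ whenever some admissible $k$ satisfies $17k \leq n \leq 18k$.

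Next I would verify the elementary coverage observation: for every $n \geq 17 \cdot 18 = 306$ the interval $[n/18, n/17]$ has length $\geq 1$ and therefore contains an integer $k \in [\lceil n/18 \rceil, \lfloor n/17 \rfloor]$, which automatically lies in $[17, n)$. So the induction already covers every $n \geq 306$.

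What remains is to exhibit base cases for the finite set $B$ of lengths $3 \leq n \leq 305$ that cannot be written in the form $17k \leq n \leq 18k$ for a strictly smaller length $k \geq 3$. This $B$ consists of $\{3, \ldots, 50\}$ together with the $13+12+\cdots+1 = 91$ integers in the gaps $\{18k+1, \ldots, 17(k+1)-1\}$ between consecutive extension intervals $[17k, 18k]$ for $3 \leq k \leq 15$ (from $k=16$ onward the intervals become adjacent), so $|B| \approx 140$. The example at the start of Section~3 already handles $n = 6$ via $w_6 = (012)^2$; for the remaining lengths I would run a backtracking search over pairs $(w, \beta) \in \Sigma_3^n \times \Sigma_2^{2n}$, extending $w$ and $\beta$ letter by letter and pruning whenever a square appears in a prefix of $w$ or of $w \shf_\beta w$.

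I expect this computer-aided sweep through $B$ to be the main obstacle in practice: the induction itself is transparent once Lemma~\ref{lem:sub} and the preceding lemma are in hand, but producing (and trusting) explicit base solutions for roughly $140$ short lengths is the part that requires real work. Once the base table is fixed, the theorem follows by strong induction on $n$.
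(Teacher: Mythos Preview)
Your argument is correct and in fact cleaner than the paper's. The paper does not recurse on $h$ directly: it first manufactures self-shufflable square-free words over $\Sigma_5$ of every length (the $u34$ trick), pushes them into $\Sigma_3$ via Brandenburg's $18$-uniform square-free morphism, and only then applies the substitution $h$ once. That composite step yields all lengths in $[18\cdot 17\cdot n,\,18\cdot 18\cdot n]$ for $n\ge 3$, hence everything from $5202$ on; the remaining gaps are then whittled down with several further families of uniform square-free morphisms ($k\in\{19,22,23,24\}$ and Currie's $k\ge 11$), leaving $335$ sporadic lengths for an explicit search. Your single-tool strong induction --- apply $h$ to an already-constructed $w_k$, choosing long/short images to hit any target length in $[17k,18k]$ --- reaches the same conclusion with no detour through $\Sigma_5$ and no auxiliary uniform morphisms, and it cuts the sporadic list to the $139$ lengths you describe. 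The trade-off is only that the paper's route recycles known morphisms from the literature, whereas yours leans entirely on Lemma~\ref{lem:sub}; conceptually your version is the more economical proof.

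One small slip: $(012)^2$ is not square-free, so it cannot serve as $w_6$ in the statement as written (the example in Section~3 illustrates that $u\shuffle u$ can be square-free even when $u$ is not, which is a different point). You need a genuinely square-free base word there, e.g.\ $w_6=010212$.
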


\begin{proof}
First of all, if $u \in \Sigma_3^*$ is any non-empty square-free word, then $u34 \in
\Sigma_5^*$ is also square-free. Furthermore, for $\beta = 0^{|u|+1}1^{|u|}011$, we have
$u34 \shf_\beta u34=u3u434$, which is obviously square-free as well. Thus, there exist
square-free words $v \in \Sigma_5^*$ of length $n$ such that $v \shf_{\beta_n} v$ is
square-free for $\beta_n = 0^{n-1}1^{n-2}011$ for each $n \geq 3$.

We map these to words in $\Sigma_3^*$ of length $18n$ for all $n \geq 3$ using an
$18$-uniform square-free morphism from $\Sigma_5^*$ to $\Sigma_3^*$ due to
Brandenburg \cite{Brandenburg}. Applying the substitution $h$ from Lemma \ref{lem:sub}
to the result, we construct words with the desired property of all integer lengths in the
intervals $[18\cdot17\cdot n, 18\cdot 18 \cdot n]$ for all $n \geq 3$. We notice that for $n \geq 17$, the
intervals obtained from $n$ and $n+1$ intersect. Therefore this construction produces
square-free words in $\Sigma_3^*$ of all lengths $\geq 18 \cdot 17 \cdot 17 = 5202$, that
can be shuffled with themselves to get a square-free word. What is more, Brandenburg
\cite{Brandenburg} found also a $22$-uniform square-free morphism from $\Sigma_5^*$ to
$\Sigma_3^*$, and there are $k$-uniform morphisms of that kind for $k \in \{19, 23, 24\}$
as well:
    \begin{align*}
        h_{19}(0) &= 0102012021020121012 &  h_{23}(0) &= 01020120210120102120121 \\
        h_{19}(1) &= 0102012021201021012 &  h_{23}(1) &= 02120102012021020121012 \\
        h_{19}(2) &= 0102012021201210212 &  h_{23}(2) &= 02120102012102120121012 \\
        h_{19}(3) &= 0102012102120121012 &  h_{23}(3) &= 02120102101210201021012 \\
        h_{19}(4) &= 0210201202120121012 &  h_{23}(4) &= 02120102101210212021012
    \end{align*}
    \begin{align*}
        h_{24}(0) &= 010201202101201020120212 \\
        h_{24}(1) &= 010201210120102101210212 \\
        h_{24}(2) &= 010201210120210201210212 \\
        h_{24}(3) &= 010201210201021201210212 \\
        h_{24}(4) &= 010210120102120210120212
    \end{align*}
Square-freeness of these morphisms is proven using Theorem \ref{Crochemore}. With the
construction above and these morphisms, we find square-free words $w_n \in \Sigma_3^*$
that can be shuffled with themselves to get a square-free word for each length $n$, where
\begin{align*}
    n \in \bigcup_{\substack{k \in \{18, 19, 22, 23, 24\} \\ n\geq 3}}[k\cdot17\cdot n, k\cdot18\cdot n].
\end{align*}

Furthermore, Currie \cite{Currie} constructed $k$-uniform square-free morphisms for all
$k \geq 11$, except for $k \in \{14,15,16,20,21,22\}$. Using these we construct square-free
words with the desired property for all $n$ that are divisible by some $d \geq 11$ and $d
\notin \{14,15,16,20,21,22\}$, and $n/d \geq 3$ from the word $w_{n/d}$.

Combining all these results, there are only $335$ values left, for which a square-free word
$w_n \in \Sigma_3^*$ that can be shuffled with itself to get a square-free word must be
explicitly constructed. Words of the lengths we found by a computer search, see the table in
the Appendix.\qed
\end{proof}

\goodbreak

\section{Shuffling Infinite Square-Free Words}

After some preliminaries and examples around the problem we shall prove the following
theorem.

\begin{theorem}\label{main}
There exists an infinite square-free word $u$ on three letters and a conducting sequence
$\beta \in \Sigma_2^\N$ such that $u \shf_\beta u$ is square-free.
\end{theorem}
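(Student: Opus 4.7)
The strategy is a compactness (König's lemma) argument that extracts an infinite self-shuffling square-free word as the limit of the finite self-shuffles guaranteed by the previous theorem.

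First I would set up a finitely branching tree $T$ whose nodes at depth $N$ are tuples $(u, a, b, \beta)$ encoding a partial self-shuffle state: here $u \in \Sigma_3^{\max(a,b)}$ is a finite square-free ternary prefix, $\beta \in \Sigma_2^N$ contains exactly $a$ zeros and $b$ ones (so $a+b=N$), and the partial shuffle $u[1..a] \shf_\beta u[1..b]$ is square-free. A child is obtained by appending one more bit to $\beta$, consuming the next letter from the designated copy of $u$, and (if necessary) extending $u$ by one new letter of $\Sigma_3$; the child is kept only if both $u$ and the extended partial shuffle remain square-free. Since at most two bit choices and three letter choices are available, each node has at most six children.

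By the preceding theorem, for every $n \geq 3$ there is a pair $(u_n, \beta_n)$ with $u_n \shf_{\beta_n} u_n$ square-free of length $2n$, and reading $\beta_n$ bit by bit yields a path in $T$ of depth $2n$ from the root. Thus $T$ has paths of unbounded depth, and König's lemma supplies an infinite branch. This branch determines an infinite word $u^\infty \in \Sigma_3^\N$ (the limit of the $u$-components) and a conducting sequence $\beta^\infty \in \Sigma_2^\N$ (the limit of the $\beta$-components), and both $u^\infty$ and $u^\infty \shf_{\beta^\infty} u^\infty$ are square-free because every finite prefix is square-free by construction.

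The hard part is to guarantee that $\beta^\infty$ contains infinitely many zeros and infinitely many ones, equivalently that both counts $a_N, b_N \to \infty$ along the chosen branch, so that $\beta^\infty$ is a legitimate infinite conducting sequence. König's lemma on its own does not enforce this, and the severely unbalanced conducting sequences produced in the previous proof (which begin with a long block of zeros) would permit a one-sided branch along which $\beta^\infty = 0^\omega$. To address this I would strengthen the finite input by producing \emph{balanced} self-shuffles $(u_n, \beta_n)$ in which every prefix of $\beta_n$ has bounded discrepancy $|a - b| \leq C$ for a fixed constant $C$; such balanced pairs can be built from short balanced seeds (located by computer search) using the substitution $h$ from Lemma~\ref{lem:sub}, with the two image choices per letter providing flexibility to restore balance after each amplification. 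Running König's lemma on the subtree $T' \subseteq T$ of nodes with $|a - b| \leq C$ then produces an infinite branch along which necessarily $a_N, b_N \to \infty$, yielding the desired pair $(u^\infty, \beta^\infty)$ and completing the proof.
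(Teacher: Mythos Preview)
Your compactness strategy is a genuinely different route from the paper's, and you correctly isolate the one real obstacle: K\"onig's lemma might hand you a branch with $\beta^\infty$ eventually constant, which is not a legitimate conducting sequence. But your proposed repair contains a gap.

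You claim that balanced self-shuffles with uniformly bounded discrepancy $|a-b|\le C$ can be produced from short seeds via the substitution $h$ of Lemma~\ref{lem:sub}, ``the two image choices per letter providing flexibility to restore balance.'' This is not so. In a self-shuffle both copies must be the \emph{same} word $h(u)$, so the length choice $\ell_j\in\{17,18\}$ is made once per position $j$ of $u$ and applies to both copies. If after step $i$ the original sequence $\beta$ has used positions $1,\dots,a_i$ from copy~0 and $1,\dots,b_i$ from copy~1, then the induced discrepancy in $\beta'$ at the corresponding point is
\[
\sum_{j\le a_i}\ell_j-\sum_{j\le b_i}\ell_j=\pm\!\!\sum_{j=\min(a_i,b_i)+1}^{\max(a_i,b_i)}\!\!\ell_j,
\]
whose absolute value is at least $17\,|a_i-b_i|$. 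So one application of $h$ multiplies the discrepancy by at least $17$; iterating blows it up rather than restoring balance. The same lower bound holds for any $k$-uniform square-free morphism (discrepancy scales by $k$), so none of the amplification tools from the preceding section yield arbitrarily long paths in your subtree $T'$. Without that, K\"onig's lemma on $T'$ is not available, and on the full tree $T$ the degenerate branch $\beta^\infty=0^\omega$ is genuinely present (take any infinite square-free $u$).

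The paper avoids compactness entirely. It builds a $48$-uniform morphism $B$ and a $96$-uniform morphism $S$ on $\Sigma_3$ together with explicit $96$-bit blocks $\gamma_i$, each containing $48$ zeros and $48$ ones, such that $S(i)=B(i)\shf_{\gamma_i}B(i)$; square-freeness of $B$ and $S$ is checked via Theorem~\ref{Crochemore}. For any infinite square-free $w=i_1i_2\cdots$ one then takes $u=B(w)$ and $\beta=\gamma_{i_1}\gamma_{i_2}\cdots$, and the block structure makes the bounded-discrepancy (hence infinitely-many-of-each-bit) property automatic. If you want to rescue the compactness route, you would need an independent source of balanced finite self-shuffles of unbounded length---for instance a single seed $(u_0,\beta_0)$ together with a square-free morphism $g$ such that $g(u_0)$ extends $u_0$ and the induced $\beta'$ extends $\beta_0$; but that is essentially the morphic construction again.
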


\begin{proof} \ We observe first that if $\beta_1$ and
$\beta_2$ are conducting sequences with $\beta_1$ finite in length and containing equally
many 0's and 1's, then
\begin{equation}\label{concat}
u \shuffle_{\beta_1\beta_2} u= \left(u_1\shuffle_{\beta_1} u_1\right)\left(u_2\shuffle_{\beta_2} u_2\right)
\ \text{ where }
u=u_1u_2 \text{ with } 2\cdot|u_1|=|\beta_1|.
\end{equation}
The words to be shuffled will be the images of the 12-uniform morphism $\rho\colon
\Sigma_4^* \to \Sigma_3^*$ defined by
\begin{align*}
\rho(0)=&010210120212\\
\rho(1)=&012101202102\\
\rho(2)=&012102010212\\
\rho(3)=&012102120102\,.
\end{align*}
Each word $\rho(i)$ is square-free, but \emph{the
morphism~$\rho$ is not square-free}. Indeed, $\rho(12),\rho(20)$ and $\rho(30)$
contain squares. For instance, $\rho(20)=01210\cdot  (201021)^2 \cdot 0120212$. For this
reason, we need a morphism $\alpha$ that will fix this problem. It will be defined below.

Each of the words $\rho(i)$, for  $i=0,1,2,3$, can be shuffled to obtain a square-free word
$\sigma(i)=\rho(i)\shuffle_{\beta_i}\rho(i)$ as seen in Table~\ref{table2}.

\begin{table}[htb]
\begin{center}
\begin{tabular}{lcl}
$\sigma(0)=010210120102120210120212$, &\quad& $\beta_0 = 000000001100001111111111$\\
$\sigma(1)=012101202101210201202102$, &\quad& $\beta_1 = 000000000011110011111111$\\
$\sigma(2)=012102010210121020120212$, &\quad& $\beta_2 = 000000110100100111101111$\\
$\sigma(3)=012102120102101202120102$, &\quad& $\beta_3 = 000000001101001011111111$\,.
\end{tabular}
\end{center}
\caption{Square-free words $\sigma(i)=\rho(i)\shuffle_{\beta_i}\rho(i)$ of length 24. The  column on the
right  shows the conducting sequences} \label{table2}
\end{table}

Next, let the uniform morphism $\alpha\colon \Sigma_3^* \to \Sigma_4^*$ be defined by
\begin{align*}
\alpha(0)=&1013\\
\alpha(1)=&1023\\
\alpha(2)=&1032\,.
\end{align*}
Notice that, for any $w \in \Sigma_3^*$, the image $\alpha(w)$ avoids the
`forbidden' words $12,20$ or $30$. Also, the word $10$ occurs in $\alpha(w)$
only as a prefix of each $\alpha(a)$ for $a\in \Sigma_3$. It is then easy to prove, and it also
follows by applying Theorem~\ref{Crochemore}, that the morphism $\alpha$ is square-free.

Finally, we combine the above morphisms to obtain $B, S\colon \Sigma_3^* \to \Sigma_3^*
$ by letting
\begin{align*}
B(i)&=\rho \alpha(i)\\
S(i)&=\sigma \alpha(i)
\end{align*}
for $i=1,2,3$. The images of the words $B(i)$ are:
\begin{align*}
B(0)=&012101202102010210120212012101202102012102120102\\
B(1)=&012101202102010210120212012102010212012102120102\\
B(2)=&012101202102010210120212012102120102012102010212\,.
\end{align*}
The lengths of these words are 48. The images of the shuffled words are of length 96:
\begin{align*}
S(0)=&01210120210121020120210201021012010212021012021201210\\
\cdot &1202101210201202102012102120102101202120102\\
S(1)=&01210120210121020120210201021012010212021012021201210\\
\cdot &2010210121020120212012102120102101202120102\\
S(2)=&01210120210121020120210201021012010212021012021201210\\
\cdot &2120102101202120102012102010210121020120212\,.
\end{align*}
Now, Theorem~\ref{Crochemore} and a computer check verify that the morphisms $B$ and
$S$ are square-free. Hence if $w=i_1i_2 \cdots$ is an infinite square-free ternary word  in
$\Sigma_3^\N$, then both $B(w)$ and $S(w)$ are square-free. By the constructions of $B$
and $S$, we have
\begin{equation}\label{shuf}
S(i)=\sigma\alpha(i) = \rho\alpha(i)\shuffle_{\beta_{\alpha(i)}}\rho\alpha(i)
=B(i)\shuffle_{\beta_{\alpha(i)}}B(i) \ \text{ for each } i \in \Sigma_3\,.
\end{equation}
Then
\[
S(w) = S(i_1)S(i_2) \cdots = \left(B(i_1)\shuffle_{\beta_{\alpha(i_1)}} B(i_1)\right)
\left( B(i_2) \shuffle_{\beta_{\alpha(i_2)}} B(i_2)\right) \, \cdots\,,
\]
and  inductively using \eqref{concat}, 
we find that $\beta = \beta_{\alpha(i_1)}\beta_{\alpha(i_2)} \cdots$ is a conducting
sequence such that $S(w)=B(w) \shuffle_{\beta} B(w)$. This proves the claim.
\end{proof}\qed

We also observe that the words $B(0), B(1)$ and $B(2)$  have equally many each of the
letters, and therefore these words are Abelian equivalent.

\begin{corollary}\label{cor}
There exist infinite square-free ternary words that are Abelian periodic.
\end{corollary}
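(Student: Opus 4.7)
The plan is to simply combine the observation preceding the corollary with the word constructed in the proof of Theorem~\ref{main}. Recall that an infinite word $x$ is \emph{Abelian periodic} with period $p$ if it can be written as a concatenation $x = x_1 x_2 x_3 \cdots$ of finite blocks of length $p$ that are all Abelian equivalent, i.e., share the same Parikh vector. So the task reduces to exhibiting a single infinite square-free ternary word admitting such a block decomposition.

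I would take any infinite square-free word $w = i_1 i_2 i_3 \cdots \in \Sigma_3^{\N}$, for instance the Hall word $t$ from~\eqref{thue}, and then form $B(w) = B(i_1)B(i_2)B(i_3)\cdots$ using the $48$-uniform morphism $B\colon \Sigma_3^* \to \Sigma_3^*$ defined in the proof of Theorem~\ref{main}. Since $B$ was verified there to be square-free and $w$ is square-free, $B(w)$ is an infinite square-free ternary word.

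It remains to check the Abelian period. The natural decomposition $B(w) = B(i_1)\cdot B(i_2) \cdot B(i_3) \cdots$ is a factorization into blocks of equal length~$48$. Since $B = \rho \alpha$ and each $\rho(j)$ (for $j \in \Sigma_4$) contains exactly four occurrences of each of the letters $0,1,2$, every image $B(i) = \rho(\alpha(i))$ has Parikh vector $4\cdot |\alpha(i)| \cdot (1,1,1)/3 = (16,16,16)$ for $i \in \Sigma_3$, which is exactly the Abelian-equivalence observation made just before the corollary. Hence the blocks of the decomposition all share the Parikh vector $(16,16,16)$, so $B(w)$ is Abelian periodic with period~$48$.

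There is no genuine obstacle here: the only content of the corollary beyond Theorem~\ref{main} is the remark, already stated by the authors, that all three images $B(0), B(1), B(2)$ have the same Parikh vector, so the infinite word produced by iterating $B$ on any square-free seed is simultaneously square-free and Abelian periodic.
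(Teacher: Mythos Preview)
Your proof is correct and follows exactly the paper's intended route: the corollary is nothing more than the observation (stated just before it) that $B(0),B(1),B(2)$ are Abelian equivalent, combined with the square-freeness of $B$ established in Theorem~\ref{main}. Your intermediate expression $4\cdot|\alpha(i)|\cdot(1,1,1)/3$ is a slip---each $\rho(j)$ has Parikh vector $(4,4,4)$, so $B(i)$ has Parikh vector $|\alpha(i)|\cdot(4,4,4)=(16,16,16)$---but the conclusion is right.
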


A simpler solution to Corollary~\ref{cor} was given in~\cite{Harju}, where an infinite
square-free ternary word was constructed that has Abelian period  equal to three.

\section{Infinite ``Almost'' Self-Shuffling Words}
In the previous section, we studied words $u$ that can be shuffled with themselves to get
another square-free word $w = u \shf_\beta u$ for some $\beta$. Now, we alter the
problem slightly, and study whether a square-free word $u$ can be shuffled with another
square-free word $w$ to get $u$. Since $|u \shf_\beta w| = |u|+|w|$ for all conducting
sequences $\beta$, this is not possible for finite words, unless $w = \varepsilon$, which is
trivial. The following theorem shows that this is however possible for infinite words:

\begin{theorem}\label{uinuSw}
    There exist infinite square-free words $u, w \in \Sigma_3^\N$
    and a conducting sequence $\beta$, such that $u = u \shf_\beta w$.
\end{theorem}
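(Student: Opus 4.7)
My plan is to construct $u$ as the fixed point of a carefully chosen morphism. Specifically, I will search for a square-free morphism $h\colon \Sigma_3^* \to \Sigma_3^*$ with the property that, for each letter $a \in \Sigma_3$, the image factors as $h(a) = a \cdot r(a)$ for some nonempty word $r(a)$, and such that the \emph{tail morphism} $r$ sending $a \mapsto r(a)$ is itself square-free. Once such a pair $(h,r)$ is in hand, I will take $u := h^\omega(0)$, which is a well-defined infinite word because $h(0)$ begins with $0$ and $|h(0)| \ge 2$, and which is square-free because $h$ is.

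Writing $u = a_1 a_2 a_3 \cdots$, the fixed-point equation $u = h(u)$ unfolds as
\[
u \;=\; h(a_1)\,h(a_2)\,h(a_3)\cdots \;=\; a_1\, r(a_1)\; a_2\, r(a_2)\; a_3\, r(a_3) \cdots\,,
\]
which exhibits $u$ as an interleaving of the subsequence $a_1 a_2 a_3 \cdots = u$ with the word $w := r(a_1)\,r(a_2)\,r(a_3)\cdots = r(u)$. The associated conducting sequence
\[
\beta \;=\; 0\,1^{|r(a_1)|}\;0\,1^{|r(a_2)|}\;0\,1^{|r(a_3)|}\cdots
\]
satisfies $u \shf_\beta w = u$ directly by construction: at stage $i$ the block $0\,1^{|r(a_i)|}$ consumes the next letter $a_i$ of the first argument and the next $|r(a_i)|$ letters $r(a_i)$ of the second. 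Since $r$ preserves square-freeness and $u$ is square-free, the auxiliary word $w$ is square-free; moreover $\beta$ contains infinitely many $0$'s and $1$'s because every $r(a)$ is nonempty, so the infinite shuffle is well-defined. Thus $(u, w, \beta)$ witnesses the claim.

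The hard part will be exhibiting a concrete morphism $h$ with the required pair of properties. Working with uniform morphisms, Theorem~\ref{Crochemore} reduces square-freeness of both $h$ and $r$ to checking the images of the twelve ternary square-free words of length three, so a computer search is entirely feasible. A natural starting point is to adapt the substitution of Lemma~\ref{lem:sub}, whose images already begin with the corresponding letter; one would pick a specific branch of each image set and verify whether the induced tails form a square-free morphism. Failing that, one can search directly among $k$-uniform square-free morphisms $h\colon \Sigma_3^* \to \Sigma_3^*$ for $k$ in the range considered in Section~3, retaining only those for which the tail morphism is also square-free.
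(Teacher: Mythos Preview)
Your strategy is exactly the paper's: take $u$ to be the fixed point of a square-free morphism $h$ whose image $h(a)$ contains a distinguished occurrence of $a$, such that the morphism obtained by deleting that occurrence is also square-free; then $u=h(u)$ decomposes as the shuffle of its distinguished subsequence (which equals $u$) with the image of $u$ under the deletion morphism. The only structural difference is where the distinguished occurrence sits: you insist on position~$1$, i.e.\ $h(a)=a\cdot r(a)$, whereas the paper places it at position~$7$ of an explicit $18$-uniform morphism and takes $\beta$ periodic with period $18$.

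Your instinct to mine Lemma~\ref{lem:sub} was exactly right, but the payoff is through the position-$7$ decomposition rather than the prefix one. The paper's $h$ is precisely the $18$-letter branch of the substitution in Lemma~\ref{lem:sub}, and deleting the seventh letter of each image yields precisely the $17$-letter branch; thus square-freeness of both $h$ and the deletion morphism $h'$ is already certified by that lemma (the paper phrases it as a direct check via Theorem~\ref{Crochemore}). As written, your proposal stops short of a proof because you never exhibit a concrete morphism; the search you describe may well succeed for the prefix variant, but existence is not guaranteed a priori, whereas the position-$7$ variant is already in hand from Lemma~\ref{lem:sub}.
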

\begin{proof}
Let $u$ be the infinite fixed point $h^\omega(0)$ of the following $18$-uniform morphism
$h$:
\begin{align*}
    h(0) &= 012021\bm{0}20102120210 \\
    h(1) &= 120102\bm{1}01210201021 \\
    h(2) &= 201210\bm{2}12021012102.
\end{align*}
Both $h$ and the morphism $h'$ that is obtained by deleting the bold-face letters from
every image are square-free by Theorem \ref{Crochemore}. Furthermore, the sequence of
bold-face letters in $u$ equals $u$. Let now $w$ be the word that consists of the
non-boldface letters. As $w = h'(u)$, it is square-free, and furthermore $u = u \shf_\beta w$
for $\beta = (0^{6}10^{11})^\omega$.\qed
\end{proof}

\goodbreak
\section{Open questions}

\begin{problem}\label{prob:1}
Which square-free words $u$ can be shuffled to obtain a square-free word $w=u \shuffle_\beta u$?
\end{problem}

\begin{problem}
Characterize the words $u$ that can be shuffled to a unique square-free word $u \shuffle_\beta u$?
\end{problem}

\begin{problem}
Which words $w$ can be obtained in more than one way from a single word $u$ using
different conducting sequences?
\end{problem}

\begin{example}
The same square-free word can be shuffled to produce different square-free words; see
Table~\ref{table8}. As one can see there $u=01021201$ gives rise to three square-free
words $\beta_i(u)$, but, e.g., $u=01201021$ gives rise to  a single one. The rest of the
square-free words of length eight with prefix $01$ do not shuffle to any square-free word.
\qed\end{example}
\begin{table}[thp]
\begin{center}
\begin{tabular}{lll}
Word $u$ &Shuffled $u \shuffle_\beta u$& Conducting $\beta$\\ \hline
01021201&0102120102012101&0000001111011011\\
&0102120102101201&0000001111100111\\
&0102101201021201&0000011000111111\\
01201021&0102101201020121&0010100111001011\\
01202101&0120210120102101&0000001110011111\\
&0120102101202101&0001100000111111\\
&0102012101202101&0010010000111111\\
01202102&0102120210201202&0010110001101011\\
01202120&0120210201202120&0000001001111111\\
01210120&0121012010210120&0000000110111111\\
01210201&0121020102101201&0000001101110111\\
&0120102012101201&0001000011110111\\
&0120102101210201&0001000100111111\\
\end{tabular}
\end{center}
\caption{Square-free words $u$ of length 8  with a prefix $01$ having a square-free shuffle
$u \shuffle_\beta u$. (Missing items for $u$ mean that they are the same as in the above
line)}\label{table8}
\end{table}

\begin{example}
Square-free words $w$ that are shuffles of $w=u \shuffle_\beta u$ of square-free $u$ seem to be
relatively few compared to the number of all square-free words.
This is natural since if $w=u \shuffle_\beta u$ then $w$ must contain an even parity of
each of the letters, and therefore a square-free $wv$ cannot be obtained as a shuffle
for any $v$ of length $|v| \le 5$.
Also, the number of
different square-free words $u$ for which there exists $\beta$ such that $u \shuffle_\beta u$ is
square-free is much lower. Table~\ref{enumeration} gives values for small lengths of~$w$.
\qed\end{example}

\begin{table}
\begin{center}
\begin{tabular}{|c|c|c|c|c|c|c|c|c|} \hline
\ L \ & \ square-free \ & \ $u \shuffle_\beta u$ \ & \ $u$ \ &\quad& \ L \ & \ square-free \ & \
$u \shuffle_\beta u$ \ & \ $u$ \ \\ \hline
4&18&0&0
&&
6&42&6&6\\
8&78&12 & 6
&&
10&144&30 & 12\\
12&264&24 & 18
&&
14&456&42 & 30\\
16&798&78 & 42
&&
18&1392&138 & 36 \\
20&2388&228 & 54
&&
22&4146&396 & 138\\
24&7032&588 &  168
&&
26&11892&1008&234 \\  \hline
\end{tabular}
\end{center}
\caption{%
The table gives the numbers of ternary square-free $w=u \shuffle_\beta u$ of length $L$ where $u$ is
square-free of length $L/2$. The column for $u$ gives the number of different square-free
words of length $L/2$ that can be shuffled to obtain a square-free word of length $L$.}
\label{enumeration}
\end{table}

The converse of Problem~\ref{prob:1} reads as follows.

\begin{problem}
Which square-free words $w$ are shuffles of square-free words: $w \in u \shuffle u$?
\end{problem}

One might ask if the word $u$ in Theorem \ref{uinuSw} could be any square-free word $u$:
\begin{problem}[due to I. Petrykiewicz]
    For any square-free word $u \in \Sigma_3^\N$, does there exist a square-free word $w \in \Sigma_3^\N$,
    such that $u = u \shf_\beta w$ for some $\beta$?
\end{problem}

A similar question can be asked with respect to Theorem \ref{main}: For any square-free
word $u \in \Sigma_3^\N$, does there exist a $\beta$, such that $u \shf_\beta u$ is
square-free? Here, the answer is negative:

\begin{example}
    Let $u$ be the lexicographically smallest square-free word.
    This is certainly a Lyndon word. According to \cite{CKPZ}, for every conducting sequence $\beta$,
    the word $u \shf_\beta u$ is lexicographically strictly smaller than $u$ and thus not square-free.
    \qed
\end{example}

The next question involves self-shuffling.

\begin{problem}
Does there exist an infinite square-free word $w$ such that $w= w\shf_\beta w$ for some
infinite $\beta$?
\end{problem}

Note that the Hall word $t$ is an infinite Lyndon word, and thus, by~\cite{CKPZ} it is not
self-shuffled.

\begin{example}
According to~\cite{CKPZ}, no infinite aperiodic Lyndon word can be self shuffled. However, finite
square-free Lyndon words can be shuffled to obtain other square-free Lyndon words. For
this, consider $u=01202102$. It can be shuffled to obtain $w=0102120210201202$ by using
the conducting sequence $\beta=0010111110010100$. Note, however, that $w < u$ in the
lexicographic ordering. \qed\end{example}

It was recently shown \cite{Buss,Rizzi} that given a word $w$, it is generally NP-complete
to decide if there is a word $u$ such that $w = u \shf_\beta u$ for some $\beta$. This
suggests the following question:
\begin{problem}
    Given a square-free word $w$, how hard is it to decide whether $w = u \shf_\beta u$ for some $\beta$ and some square-free word $u$?
\end{problem}

\bibliographystyle{plain}
\bibliography{bib-small}

\begin{thebibliography}{10}

\bibitem{Bean}
Dwight~R. Bean, Andrzej Ehrenfeucht, and George~F. McNulty.
\newblock Avoidable patterns in strings of symbols.
\newblock {\em Pacific J. Math.}, 85(2):261--294, 1979.

\bibitem{Brandenburg}
Franz-Josef Brandenburg.
\newblock Uniformly growing {$k$}th power-free homomorphisms.
\newblock {\em Theoret. Comput. Sci.}, 23(1):69--82, 1983.

\bibitem{Buss}
Sam Buss and Michael Soltys.
\newblock Unshuffling a square is {NP}-hard, 2012.
\newblock Submitted.

\bibitem{CKPZ}
Emilie Charlier, Teturo Kamae, Svetlana Puzynina, and Luca~Q. Zamboni.
\newblock Self-shuffling words.
\newblock In Fedor~V. Fomin, Rusins Freivalds, Marta~Z. Kwiatkowska, and David
  Peleg, editors, {\em ICALP (2)}, volume 7966 of {\em Lecture Notes in
  Computer Science}, pages 113--124. Springer, 2013.

\bibitem{Crochemore}
Max Crochemore.
\newblock Sharp characterizations of squarefree morphisms.
\newblock {\em Theoret. Comput. Sci.}, 18(2):221--226, 1982.

\bibitem{CurrieRampersad}
James Currie and Narad Rampersad.
\newblock Cubefree words with many squares.
\newblock {\em DMTCS}, 12(3):29--34, 2010.

\bibitem{Currie}
James~D. Currie.
\newblock Infinite ternary square-free words concatenated from permutations of
  a single word.
\newblock {\em Theor. Comput. Sci.}, 482:1--8, 2013.

\bibitem{Hall}
Marshall Hall, Jr.
\newblock Generators and relations in groups---{T}he {B}urnside problem.
\newblock In {\em Lectures on {M}odern {M}athematics, {V}ol. {II}}, pages
  42--92. Wiley, New York, 1964.

\bibitem{harju2}
Tero Harju.
\newblock Square-free words obtained from prefixes by permutations.
\newblock {\em Theoret. Comput. Sci.}, 429:128--133, 2012.

\bibitem{Harju}
Tero Harju.
\newblock A note on square-free shuffles of words.
\newblock In Juhani Karhum{\"a}ki, Arto Lepist{\"o}, and Luca~Q. Zamboni,
  editors, {\em WORDS}, volume 8079 of {\em Lecture Notes in Computer Science},
  pages 154--160. Springer, 2013.

\bibitem{Lothaire}
M.~Lothaire.
\newblock {\em Combinatorics on words}.
\newblock Cambridge Mathematical Library. Cambridge University Press,
  Cambridge, 1997.

\bibitem{ProdingerUrbanek}
P.~Prodinger and F.J. Urbanek.
\newblock Infinite 0-1-sequences without long adjacent identical blocks.
\newblock {\em Discrete Math.}, 28:277--289, 1979.

\bibitem{RampersadSW}
Narad Rampersad, Jeffrey Shallit, and Ming wei Wang.
\newblock Avoiding large squares in infinite binary words.
\newblock {\em Theor. Comput. Sci.}, 339(1):19--34, 2005.

\bibitem{Rizzi}
Romeo Rizzi and St{\'e}phane Vialette.
\newblock On recognizing words that are squares for the shuffle product.
\newblock In Andrei~A. Bulatov and Arseny~M. Shur, editors, {\em CSR}, volume
  7913 of {\em Lecture Notes in Computer Science}, pages 235--245. Springer,
  2013.

\end{thebibliography}

\newpage
\section*{Appendix}
We use the following square-free words that can be self-shuffled to get a square-free word
using the corresponding conducting sequence to create longer words:
{\small\addtolength{\jot}{-0.2em}
        \begin{align*}
            w_3 &= 012 & \beta_3 &= 0^{2}101^{2} \\
            w_4 &= 0120 & \beta_4 &= 0^{2}10^{2}1^{3} \\
            w_5 &= 01201 & \beta_5 &= 0^{3}1^{2}0101^{2} \\
            w_6 &= 010212 & \beta_6 &= 0^{4}10^{2}1^{5} \\
            w_7 &= 0102120 & \beta_7 &= 0^{6}1^{4}01^{3} \\
            w_8 &= 01021201 & \beta_8 &= 0^{6}1^{4}01^{2}01^{2} \\
            w_9 &= 010212012 & \beta_{9} &=  0^{6}1^{4}01^{2}0101^{2} \\
            w_{10} &= 0102120102 & \beta_{10} &=  0^{6}1^{4}01^{2}0^{3}1^{4} \\
            w_{11} &= 01210212021 & \beta_{11} &=  0^{4}1^{3}0^{5}10^{2}1^{7} \\
            w_{12} &= 010201202120 & \beta_{12} &=  0^{11}1^{10}01^{2} \\
            w_{13} &= 0102012021201 & \beta_{13} &=  0^{11}1^{10}0^{2}1^{3} \\
            w_{14} &= 01210201202102 & \beta_{14} &= 0^{2}10^{3}10^{2}1^{5}0^{4}1^{3}0^{2}1^{2}01^{2} \\
            w_{15} &= 012102012021201 & \beta_{15} &=  0^{2}10^{3}10^{2}1^{5}0^{4}1^{3}01^{2}0^{2}101^{2} \\
            w_{16} &= 0102012021012102 & \beta_{16} &=  0^{13}1^{4}01^{6}0^{2}1^{6} \\
            w_{17} &= 01020120210121020 & \beta_{17} &=  0^{13}1^{7}01^{2}01^{2}01^{2}01^{4} \\
            w_{19} &= 0102012021012102012 & \beta_{19} &=  0^{13}1^{4}01^{6}0^{2}1^{2}01^{3}01^{2}01^{2} \\
            w_{20} &= 01020120210121020102 & \beta_{20} &=  0^{13}1^{7}0^{5}1^{6}0^{2}1^{7} \\
            w_{21} &= 010201202101210201021 & \beta_{21} &=  0^{13}1^{4}01^{6}0^{7}1^{11} \\
            w_{26} &= 01020120210120102101202120 & \beta_{26} &=  0^{25}1^{24}01^{2}
        \end{align*}
    }
    We obtain longer square-free words having the same property, by applying compositions of
    the following square-free morphisms to them (their square-freeness is checked using
    Theorem \ref{Crochemore}):
{\small\addtolength{\jot}{-0.2em}
\begin{align*}
    \sigma_{1}(0) &= 1 & \sigma_{1}(1) &= 2 & \sigma_{1}(2) &= 0 \\
    \sigma_{2}(0) &= 2 & \sigma_{2}(1) &= 0 & \sigma_{2}(2) &= 1 \\
    \sigma_{3}(0) &= 1 & \sigma_{3}(1) &= 0 & \sigma_{3}(2) &= 2 \\
    \sigma_{4}(0) &= 2 & \sigma_{4}(1) &= 1 & \sigma_{4}(2) &= 0 \\
    \sigma_{5}(0) &= 0 & \sigma_{5}(1) &= 2 & \sigma_{5}(2) &= 1 \\
    \sigma_{6}(0) &= 0102012 & \sigma_{6}(1) &= 021012 & \sigma_{6}(2) &= 10212  \\
    \sigma_{7}(0) &= 0102012 & \sigma_{7}(1) &= 021012 & \sigma_{7}(2) &= 102010212 \\
    \sigma_{8}(0) &= 0102012 & \sigma_{8}(1) &= 0210201021012 & \sigma_{8}(2) &= 10212 \\
    \sigma_{9}(0) &= 0102 & \sigma_{9}(1) &= 01210120212 & \sigma_{9}(2) &= 012102010212  \\
    \sigma_{10}(0) &= 0102012 & \sigma_{10}(1) &= 021012 & \sigma_{10}(2) &= 1020102101210212 \\
    \sigma_{11}(0) &= 0 & \sigma_{11}(1) &= 102012021201021012 & \sigma_{11}(2) &= 10212021012  \\
    \sigma_{12}(0) &= 0 & \sigma_{12}(1) &= 102012021012 & \sigma_{12}(2) &= 102120210201021012  \\
    \sigma_{13}(0) &= 0102012 & \sigma_{13}(1) &= 0210201021012 & \sigma_{13}(2) &= 021201210212  \\
    \sigma_{14}(0) &= 0102012 & \sigma_{14}(1) &= 0210201021012 & \sigma_{14}(2) &= 0210201210212  \\
    \sigma_{15}(0) &= 0102 & \sigma_{15}(1) &= 012101201020120212 & \sigma_{15}(2) &= 012102010212  \\
    \sigma_{16}(0) &= 0102012 & \sigma_{16}(1) &= 02101210201021012 & \sigma_{16}(2) &= 02102010212  \\
    \sigma_{17}(0) &= 0 & \sigma_{17}(1) &= 102012021012102010212021012 & \sigma_{17}(2) &= 10201202120102120210201021012
\end{align*}
}
\newpage
In the following table, an entry $\sigma_{i_1 \cdot \ldots \cdot i_j}(w_k)$ is short-hand for
$\sigma_{i_1} \circ \cdots \circ \sigma_{i_j}(w_k)$: {\scriptsize\addtolength{\jot}{-0.2em}
\noindent\begin{multicols}{6}
\vspace*{-28pt}
\begin{align*}
w_{18} &= \sigma_{6}(w_{3})\\
w_{22} &= \sigma_{7}(w_{3})\\
w_{23} &= \sigma_{6\cdot 2}(w_{4})\\
w_{24} &= \sigma_{6\cdot 1}(w_{4})\\
w_{25} &= \sigma_{8}(w_{3})\\
w_{27} &= \sigma_{9}(w_{3})\\
w_{28} &= \sigma_{7\cdot 1}(w_{4})\\
w_{29} &= \sigma_{10}(w_{3})\\
w_{30} &= \sigma_{11}(w_{3})\\
w_{31} &= \sigma_{12}(w_{3})\\
w_{32} &= \sigma_{13}(w_{3})\\
w_{34} &= \sigma_{15}(w_{3})\\
w_{35} &= \sigma_{16}(w_{3})\\
w_{37} &= \sigma_{7\cdot 1}(w_{5})\\
w_{38} &= \sigma_{15}(w_{4})\\
w_{40} &= \sigma_{14}(w_{4})\\
w_{41} &= \sigma_{11\cdot 2}(w_{4})\\
w_{42} &= \sigma_{16}(w_{4})\\
w_{43} &= \sigma_{12\cdot 1}(w_{4})\\
w_{45} &= \sigma_{10\cdot 2}(w_{4})\\
w_{46} &= \sigma_{14\cdot 1}(w_{4})\\
w_{47} &= \sigma_{6\cdot 1}(w_{8})\\
w_{49} &= \sigma_{12\cdot 2}(w_{4})\\
w_{50} &= \sigma_{9\cdot 1}(w_{5})\\
w_{53} &= \sigma_{14}(w_{5})\\
w_{56} &= \sigma_{15}(w_{5})\\
w_{58} &= \sigma_{17}(w_{4})\\
w_{59} &= \sigma_{16}(w_{5})\\
w_{61} &= \sigma_{12\cdot 1}(w_{5})\\
w_{62} &= \sigma_{12}(w_{6})\\
w_{63} &= \sigma_{16\cdot 1}(w_{5})\\
w_{64} &= \sigma_{15\cdot 1}(w_{5})\\
w_{67} &= \sigma_{6\cdot 2}(w_{11})\\
w_{70} &= \sigma_{16}(w_{6})\\
w_{71} &= \sigma_{13}(w_{7})\\
w_{73} &= \sigma_{14}(w_{7})\\
w_{74} &= \sigma_{10\cdot 2}(w_{7})\\
w_{79} &= \sigma_{14\cdot 1}(w_{7})\\
w_{80} &= \sigma_{12\cdot 2}(w_{7})\\
w_{82} &= \sigma_{8}(w_{10})\\
w_{83} &= \sigma_{13\cdot 2}(w_{8})\\
w_{86} &= \sigma_{17\cdot 2}(w_{4})\\
w_{89} &= \sigma_{11\cdot 1}(w_{8})\\
w_{94} &= \sigma_{16}(w_{8})\\
w_{97} &= \sigma_{9\cdot 1}(w_{11})\\
w_{98} &= \sigma_{15\cdot 1}(w_{8})\\
w_{101} &= \sigma_{11\cdot 2}(w_{10})\\
w_{103} &= \sigma_{13}(w_{10})\\
w_{106} &= \sigma_{14}(w_{10})\\
w_{107} &= \sigma_{9\cdot 1}(w_{12})\\
w_{109} &= \sigma_{13\cdot 1}(w_{10})\\
w_{113} &= \sigma_{17\cdot 1}(w_{5})\\
w_{118} &= \sigma_{15\cdot 1}(w_{11})\\
w_{122} &= \sigma_{16\cdot 1}(w_{10})\\
w_{127} &= \sigma_{11\cdot 1}(w_{12})\\
w_{131} &= \sigma_{6\cdot 2\cdot 7}(w_{3})\\
w_{134} &= \sigma_{13\cdot 3}(w_{12})\\
w_{137} &= \sigma_{6\cdot 6\cdot 2}(w_{4})\\
w_{139} &= \sigma_{14}(w_{13})\\
w_{142} &= \sigma_{17}(w_{8})\\
w_{146} &= \sigma_{16\cdot 1}(w_{12})\\
w_{149} &= \sigma_{6\cdot 2\cdot 8}(w_{3})\\
w_{151} &= \sigma_{6\cdot 8}(w_{3})\\
w_{157} &= \sigma_{16\cdot 1}(w_{13})\\
w_{158} &= \sigma_{16}(w_{14})\\
w_{163} &= \sigma_{7\cdot 2\cdot 7}(w_{3})\\
w_{166} &= \sigma_{15\cdot 2}(w_{14})\\
w_{167} &= \sigma_{13}(w_{16})\\
w_{173} &= \sigma_{6\cdot 1\cdot 10}(w_{3})\\
w_{178} &= \sigma_{14\cdot 1}(w_{16})\\
w_{179} &= \sigma_{14}(w_{17})\\
w_{181} &= \sigma_{10}(w_{19})\\
w_{191} &= \sigma_{6\cdot 2\cdot 13}(w_{3})\\
w_{193} &= \sigma_{9\cdot 7}(w_{3})\\
w_{194} &= \sigma_{6\cdot 8}(w_{4})\\
w_{197} &= \sigma_{6\cdot 3\cdot 14}(w_{3})\\
w_{199} &= \sigma_{6\cdot 14}(w_{3})\\
w_{202} &= \sigma_{17}(w_{12})\\
w_{206} &= \sigma_{15\cdot 1}(w_{17})\\
w_{211} &= \sigma_{7\cdot 10}(w_{3})\\
w_{214} &= \sigma_{7\cdot 2\cdot 10}(w_{3})\\
w_{218} &= \sigma_{12\cdot 7}(w_{3})\\
w_{223} &= \sigma_{6\cdot 7\cdot 1}(w_{5})\\
w_{226} &= \sigma_{7\cdot 1\cdot 12}(w_{3})\\
w_{227} &= \sigma_{7\cdot 12}(w_{3})\\
w_{229} &= \sigma_{7\cdot 2\cdot 12}(w_{3})\\
w_{233} &= \sigma_{7\cdot 1\cdot 13}(w_{3})\\
w_{239} &= \sigma_{10\cdot 8}(w_{3})\\
w_{241} &= \sigma_{11\cdot 8}(w_{3})\\
w_{251} &= \sigma_{7\cdot 2\cdot 15}(w_{3})\\
w_{254} &= \sigma_{7\cdot 1\cdot 16}(w_{3})\\
w_{257} &= \sigma_{8\cdot 12}(w_{3})\\
w_{262} &= \sigma_{8\cdot 13}(w_{3})\\
w_{263} &= \sigma_{13\cdot 8}(w_{3})\\
w_{269} &= \sigma_{8\cdot 14}(w_{3})\\
w_{271} &= \sigma_{14\cdot 8}(w_{3})\\
w_{274} &= \sigma_{9\cdot 12}(w_{3})\\
w_{277} &= \sigma_{8\cdot 5\cdot 14}(w_{3})\\
w_{278} &= \sigma_{7\cdot 15}(w_{4})\\
w_{281} &= \sigma_{8\cdot 3\cdot 14}(w_{3})\\
w_{283} &= \sigma_{8\cdot 1\cdot 14}(w_{3})\\
w_{293} &= \sigma_{9\cdot 2\cdot 13}(w_{3})\\
w_{298} &= \sigma_{9\cdot 4\cdot 14}(w_{3})\\
w_{302} &= \sigma_{8\cdot 10}(w_{4})\\
w_{307} &= \sigma_{13\cdot 2\cdot 10}(w_{3})\\
w_{311} &= \sigma_{12\cdot 12}(w_{3})\\
w_{313} &= \sigma_{10\cdot 13}(w_{3})\\
w_{314} &= \sigma_{8\cdot 15}(w_{4})\\
w_{317} &= \sigma_{14\cdot 10}(w_{3})\\
w_{326} &= \sigma_{10\cdot 15}(w_{3})\\
w_{331} &= \sigma_{11\cdot 15}(w_{3})\\
w_{334} &= \sigma_{13\cdot 8}(w_{4})\\
w_{337} &= \sigma_{14\cdot 12}(w_{3})\\
w_{346} &= \sigma_{13\cdot 14}(w_{3})\\
w_{347} &= \sigma_{11\cdot 3\cdot 14}(w_{3})\\
w_{349} &= \sigma_{6\cdot 1\cdot 17}(w_{4})\\
w_{353} &= \sigma_{12\cdot 1\cdot 15}(w_{3})\\
w_{358} &= \sigma_{12\cdot 2\cdot 14}(w_{3})\\
w_{359} &= \sigma_{13\cdot 15}(w_{3})\\
w_{362} &= \sigma_{15\cdot 1\cdot 13}(w_{3})\\
w_{367} &= \sigma_{16\cdot 1\cdot 12}(w_{3})\\
w_{373} &= \sigma_{11\cdot 13}(w_{4})\\
w_{379} &= \sigma_{16\cdot 4\cdot 14}(w_{3})\\
w_{382} &= \sigma_{15\cdot 16}(w_{3})\\
w_{383} &= \sigma_{8\cdot 10\cdot 2}(w_{4})\\
w_{386} &= \sigma_{15\cdot 2\cdot 15}(w_{3})\\
w_{389} &= \sigma_{14\cdot 1\cdot 16}(w_{3})\\
w_{394} &= \sigma_{11\cdot 4\cdot 14}(w_{4})\\
w_{397} &= \sigma_{11\cdot 1\cdot 13}(w_{4})\\
w_{398} &= \sigma_{15\cdot 2\cdot 16}(w_{3})\\
w_{401} &= \sigma_{17\cdot 2}(w_{21})\\
w_{409} &= \sigma_{9\cdot 15\cdot 2}(w_{4})\\
w_{419} &= \sigma_{7\cdot 17}(w_{3})\\
w_{421} &= \sigma_{7\cdot 4\cdot 17}(w_{3})\\
w_{422} &= \sigma_{13\cdot 3\cdot 13}(w_{4})\\
w_{431} &= \sigma_{8\cdot 14\cdot 2}(w_{5})\\
w_{433} &= \sigma_{11\cdot 13\cdot 1}(w_{4})\\
w_{439} &= \sigma_{12\cdot 1\cdot 16}(w_{4})\\
w_{443} &= \sigma_{10\cdot 14\cdot 1}(w_{4})\\
w_{446} &= \sigma_{8\cdot 1\cdot 13}(w_{5})\\
w_{449} &= \sigma_{16\cdot 4\cdot 13}(w_{4})\\
w_{454} &= \sigma_{15\cdot 16}(w_{4})\\
w_{457} &= \sigma_{17\cdot 8}(w_{3})\\
w_{458} &= \sigma_{16\cdot 5\cdot 14}(w_{4})\\
w_{461} &= \sigma_{16\cdot 1\cdot 13}(w_{4})\\
w_{463} &= \sigma_{9\cdot 15\cdot 1}(w_{4})\\
w_{466} &= \sigma_{12\cdot 15\cdot 2}(w_{4})\\
w_{467} &= \sigma_{8\cdot 4\cdot 17}(w_{3})\\
w_{478} &= \sigma_{8\cdot 2\cdot 17}(w_{4})\\
w_{479} &= \sigma_{9\cdot 2\cdot 13}(w_{5})\\
w_{482} &= \sigma_{16\cdot 3\cdot 14}(w_{4})\\
w_{487} &= \sigma_{13\cdot 15\cdot 2}(w_{4})\\
w_{491} &= \sigma_{10\cdot 16\cdot 1}(w_{4})\\
w_{499} &= \sigma_{12\cdot 7}(w_{7})\\
w_{502} &= \sigma_{14\cdot 14\cdot 2}(w_{4})\\
w_{503} &= \sigma_{10\cdot 13\cdot 2}(w_{5})\\
w_{509} &= \sigma_{9\cdot 5\cdot 17}(w_{4})\\
w_{514} &= \sigma_{9\cdot 4\cdot 17}(w_{3})\\
w_{521} &= \sigma_{9\cdot 2\cdot 17}(w_{3})\\
w_{523} &= \sigma_{9\cdot 1\cdot 17}(w_{4})\\
w_{526} &= \sigma_{9\cdot 4\cdot 17}(w_{4})\\
w_{538} &= \sigma_{9\cdot 2\cdot 16}(w_{5})\\
w_{541} &= \sigma_{17\cdot 10}(w_{3})\\
w_{542} &= \sigma_{10\cdot 5\cdot 17}(w_{3})\\
w_{547} &= \sigma_{10\cdot 1\cdot 17}(w_{4})\\
w_{554} &= \sigma_{11\cdot 17}(w_{4})\\
w_{557} &= \sigma_{13\cdot 14\cdot 2}(w_{5})\\
w_{562} &= \sigma_{14\cdot 16\cdot 1}(w_{4})\\
w_{563} &= \sigma_{11\cdot 4\cdot 17}(w_{3})\\
w_{566} &= \sigma_{10\cdot 5\cdot 16}(w_{5})\\
w_{569} &= \sigma_{17\cdot 1\cdot 10}(w_{3})\\
w_{571} &= \sigma_{17\cdot 12}(w_{3})\\
w_{577} &= \sigma_{11\cdot 1\cdot 17}(w_{3})\\
w_{586} &= \sigma_{9\cdot 2\cdot 13}(w_{6})\\
w_{587} &= \sigma_{11\cdot 3\cdot 17}(w_{3})\\
w_{593} &= \sigma_{10\cdot 2\cdot 16}(w_{5})\\
w_{599} &= \sigma_{17\cdot 2\cdot 12}(w_{3})\\
w_{601} &= \sigma_{17\cdot 14}(w_{3})\\
w_{607} &= \sigma_{13\cdot 4\cdot 17}(w_{3})\\
w_{613} &= \sigma_{13\cdot 2\cdot 17}(w_{3})\\
w_{614} &= \sigma_{13\cdot 3\cdot 17}(w_{3})\\
w_{617} &= \sigma_{7\cdot 17\cdot 1}(w_{4})\\
w_{619} &= \sigma_{13\cdot 4\cdot 17}(w_{4})\\
w_{622} &= \sigma_{13\cdot 1\cdot 17}(w_{4})\\
w_{626} &= \sigma_{17\cdot 2\cdot 13}(w_{3})\\
w_{631} &= \sigma_{11\cdot 15\cdot 1}(w_{5})\\
w_{634} &= \sigma_{14\cdot 10}(w_{6})\\
w_{641} &= \sigma_{14\cdot 14\cdot 1}(w_{5})\\
w_{643} &= \sigma_{8\cdot 14\cdot 2}(w_{7})\\
w_{647} &= \sigma_{13\cdot 12\cdot 1}(w_{5})\\
w_{653} &= \sigma_{17\cdot 3\cdot 14}(w_{3})\\
w_{659} &= \sigma_{16\cdot 4\cdot 17}(w_{3})\\
w_{661} &= \sigma_{16\cdot 5\cdot 17}(w_{3})\\
w_{662} &= \sigma_{16\cdot 17}(w_{4})\\
w_{673} &= \sigma_{16\cdot 14\cdot 1}(w_{5})\\
w_{674} &= \sigma_{14\cdot 12}(w_{6})\\
w_{677} &= \sigma_{7\cdot 10\cdot 1}(w_{10})\\
w_{683} &= \sigma_{7\cdot 1\cdot 16}(w_{8})\\
w_{691} &= \sigma_{16\cdot 4\cdot 16}(w_{5})\\
w_{694} &= \sigma_{8\cdot 17\cdot 1}(w_{4})\\
w_{698} &= \sigma_{10\cdot 2\cdot 13}(w_{7})\\
w_{701} &= \sigma_{8\cdot 17}(w_{5})\\
w_{706} &= \sigma_{8\cdot 17\cdot 2}(w_{4})\\
w_{709} &= \sigma_{8\cdot 5\cdot 17}(w_{5})\\
w_{718} &= \sigma_{15\cdot 15\cdot 1}(w_{5})\\
w_{719} &= \sigma_{8\cdot 3\cdot 17}(w_{5})\\
w_{727} &= \sigma_{6\cdot 4\cdot 13}(w_{11})\\
w_{733} &= \sigma_{16\cdot 2\cdot 12}(w_{7})\\
w_{734} &= \sigma_{16\cdot 1\cdot 12}(w_{6})\\
w_{739} &= \sigma_{17\cdot 1\cdot 13}(w_{4})\\
w_{743} &= \sigma_{17\cdot 4\cdot 13}(w_{4})\\
w_{746} &= \sigma_{11\cdot 10\cdot 2}(w_{7})\\
w_{751} &= \sigma_{10\cdot 13\cdot 2}(w_{7})\\
w_{757} &= \sigma_{13\cdot 15}(w_{7})\\
w_{758} &= \sigma_{16\cdot 4\cdot 14}(w_{6})\\
w_{761} &= \sigma_{13\cdot 11\cdot 2}(w_{7})\\
w_{766} &= \sigma_{17\cdot 1\cdot 14}(w_{4})\\
w_{769} &= \sigma_{17\cdot 2\cdot 13}(w_{4})\\
w_{773} &= \sigma_{14\cdot 13}(w_{7})\\
w_{778} &= \sigma_{14\cdot 1\cdot 16}(w_{6})\\
w_{787} &= \sigma_{14\cdot 14}(w_{7})\\
w_{794} &= \sigma_{17\cdot 3\cdot 14}(w_{4})\\
w_{797} &= \sigma_{17\cdot 11\cdot 2}(w_{4})\\
w_{802} &= \sigma_{12\cdot 14\cdot 2}(w_{7})\\
w_{809} &= \sigma_{12\cdot 12\cdot 2}(w_{8})\\
w_{811} &= \sigma_{13\cdot 13\cdot 2}(w_{7})\\
w_{818} &= \sigma_{15\cdot 2\cdot 15}(w_{7})\\
w_{821} &= \sigma_{9\cdot 1\cdot 11}(w_{10})\\
w_{823} &= \sigma_{11\cdot 17\cdot 1}(w_{4})\\
w_{827} &= \sigma_{14\cdot 16}(w_{7})\\
w_{829} &= \sigma_{17\cdot 13\cdot 1}(w_{4})\\
w_{838} &= \sigma_{10\cdot 4\cdot 17}(w_{5})\\
w_{839} &= \sigma_{7\cdot 5\cdot 17}(w_{7})\\
w_{842} &= \sigma_{7\cdot 4\cdot 17}(w_{6})\\
w_{853} &= \sigma_{17\cdot 10\cdot 2}(w_{4})\\
w_{857} &= \sigma_{12\cdot 17\cdot 1}(w_{4})\\
w_{859} &= \sigma_{14\cdot 14\cdot 2}(w_{7})\\
w_{862} &= \sigma_{15\cdot 3\cdot 14}(w_{7})\\
w_{863} &= \sigma_{10\cdot 11\cdot 1}(w_{8})\\
w_{866} &= \sigma_{9\cdot 4\cdot 16}(w_{8})\\
w_{877} &= \sigma_{12\cdot 17\cdot 2}(w_{5})\\
w_{878} &= \sigma_{11\cdot 14\cdot 1}(w_{8})\\
w_{881} &= \sigma_{7\cdot 5\cdot 13}(w_{11})\\
w_{883} &= \sigma_{16\cdot 13\cdot 1}(w_{7})\\
w_{886} &= \sigma_{13\cdot 5\cdot 13}(w_{8})\\
w_{887} &= \sigma_{11\cdot 5\cdot 16}(w_{8})\\
w_{898} &= \sigma_{13\cdot 5\cdot 17}(w_{5})\\
w_{907} &= \sigma_{6\cdot 6\cdot 8}(w_{3})\\
w_{911} &= \sigma_{14\cdot 13\cdot 2}(w_{8})\\
w_{914} &= \sigma_{17\cdot 12\cdot 2}(w_{5})\\
w_{1031} &= \sigma_{10\cdot 10}(w_{11})\\
w_{1033} &= \sigma_{7\cdot 12\cdot 2}(w_{12})\\
w_{1039} &= \sigma_{6\cdot 5\cdot 17}(w_{10})\\
w_{1042} &= \sigma_{9\cdot 2\cdot 17}(w_{6})\\
w_{1046} &= \sigma_{15\cdot 2\cdot 10}(w_{10})\\
w_{1049} &= \sigma_{17\cdot 3\cdot 14}(w_{5})\\
w_{1051} &= \sigma_{6\cdot 6\cdot 10}(w_{3})\\
w_{1061} &= \sigma_{10\cdot 4\cdot 14}(w_{10})\\
w_{1063} &= \sigma_{14\cdot 9\cdot 1}(w_{11})\\
w_{1069} &= \sigma_{8\cdot 1\cdot 14}(w_{11})\\
w_{1082} &= \sigma_{17\cdot 10}(w_{6})\\
w_{1087} &= \sigma_{17\cdot 14\cdot 1}(w_{5})\\
w_{1091} &= \sigma_{10\cdot 5\cdot 17}(w_{7})\\
w_{1093} &= \sigma_{13\cdot 4\cdot 13}(w_{10})\\
w_{1094} &= \sigma_{16\cdot 10}(w_{10})\\
w_{1097} &= \sigma_{10\cdot 17\cdot 1}(w_{5})\\
w_{1103} &= \sigma_{10\cdot 10}(w_{12})\\
w_{1109} &= \sigma_{17\cdot 3\cdot 17}(w_{3})\\
w_{1114} &= \sigma_{17\cdot 4\cdot 17}(w_{4})\\
w_{1117} &= \sigma_{12\cdot 11\cdot 2}(w_{11})\\
w_{1373} &= \sigma_{6\cdot 7\cdot 12}(w_{3})\\
w_{1381} &= \sigma_{11\cdot 11\cdot 1}(w_{14})\\
w_{1382} &= \sigma_{6\cdot 6\cdot 15}(w_{4})\\
w_{1399} &= \sigma_{10\cdot 4\cdot 17}(w_{8})\\
w_{1402} &= \sigma_{16\cdot 13}(w_{12})\\
w_{1409} &= \sigma_{9\cdot 7\cdot 7}(w_{3})\\
w_{1418} &= \sigma_{13\cdot 13\cdot 2}(w_{12})\\
w_{1423} &= \sigma_{16\cdot 14}(w_{11})\\
w_{1427} &= \sigma_{6\cdot 7\cdot 13}(w_{3})\\
w_{1429} &= \sigma_{17\cdot 13\cdot 1}(w_{7})\\
w_{1433} &= \sigma_{16\cdot 2\cdot 12}(w_{11})\\
w_{1438} &= \sigma_{8\cdot 5\cdot 17}(w_{10})\\
w_{1439} &= \sigma_{11\cdot 13}(w_{14})\\
w_{1447} &= \sigma_{6\cdot 11\cdot 8}(w_{3})\\
w_{1451} &= \sigma_{16\cdot 1\cdot 12}(w_{11})\\
w_{1453} &= \sigma_{17\cdot 2\cdot 14}(w_{7})\\
w_{1454} &= \sigma_{6\cdot 6\cdot 14}(w_{4})\\
w_{1459} &= \sigma_{7\cdot 6\cdot 14}(w_{3})\\
w_{1466} &= \sigma_{15\cdot 3\cdot 14}(w_{11})\\
w_{1471} &= \sigma_{7\cdot 17\cdot 2}(w_{10})\\
w_{1478} &= \sigma_{7\cdot 17\cdot 1}(w_{11})\\
w_{1481} &= \sigma_{6\cdot 8\cdot 10}(w_{3})\\
w_{1483} &= \sigma_{17\cdot 11}(w_{8})\\
w_{1486} &= \sigma_{17\cdot 8}(w_{10})\\
w_{1487} &= \sigma_{12\cdot 4\cdot 17}(w_{8})\\
w_{1489} &= \sigma_{13\cdot 1\cdot 14}(w_{13})\\
w_{1493} &= \sigma_{16\cdot 3\cdot 14}(w_{11})\\
w_{1733} &= \sigma_{12\cdot 17\cdot 1}(w_{8})\\
w_{1741} &= \sigma_{11\cdot 6\cdot 10}(w_{3})\\
w_{1747} &= \sigma_{6\cdot 8\cdot 16}(w_{3})\\
w_{1753} &= \sigma_{11\cdot 5\cdot 10}(w_{19})\\
w_{1754} &= \sigma_{14\cdot 1\cdot 16}(w_{14})\\
w_{1759} &= \sigma_{13\cdot 16\cdot 2}(w_{14})\\
w_{1762} &= \sigma_{6\cdot 12\cdot 10}(w_{3})\\
w_{1766} &= \sigma_{15\cdot 7\cdot 7}(w_{3})\\
w_{1774} &= \sigma_{7\cdot 6\cdot 14}(w_{4})\\
w_{1777} &= \sigma_{9\cdot 17\cdot 2}(w_{10})\\
w_{1783} &= \sigma_{10\cdot 1\cdot 16}(w_{17})\\
w_{1787} &= \sigma_{11\cdot 2\cdot 13}(w_{17})\\
w_{1789} &= \sigma_{10\cdot 12}(w_{19})\\
w_{1801} &= \sigma_{13\cdot 14}(w_{16})\\
w_{1811} &= \sigma_{10\cdot 13\cdot 1}(w_{17})\\
w_{1814} &= \sigma_{6\cdot 11\cdot 8}(w_{4})\\
w_{1822} &= \sigma_{7\cdot 7\cdot 15}(w_{3})\\
w_{1823} &= \sigma_{7\cdot 12\cdot 8}(w_{3})\\
w_{1831} &= \sigma_{11\cdot 2\cdot 16}(w_{16})
\end{align*}
\end{multicols}
}

\end{document}